\newcommand{\img}{\operatorname{img}}
\newcommand{\len}{\operatorname{len}}
\newcommand{\concat}{\ensuremath{\!+\!\!+}}
\newcommand{\cons}{\ensuremath{::}}
\newcommand{\map}{\operatorname{map}}
\newcommand{\lex}{\ensuremath{\sqsubseteq}}
\newcommand{\nil}{\ensuremath{[\,]}}
\newcommand{\prefix}{\ensuremath{\preccurlyeq}}
\newcommand{\arxiv}[1]{\ifthenelse{\boolean{arxiv}}{#1}{}}
\newcommand{\noarxiv}[1]{\ifthenelse{\boolean{arxiv}}{}{#1}}
\renewcommand{\bf}{\normalfont \bfseries}
\newcommand{\xcite}[1]{\nolinebreak{ }\cite{#1}}
\title{An implementation of Deflate in Coq}
\author{Christoph-Simon Senjak and Martin Hofmann \email{\{christoph.senjak,hofmann\}@ifi.lmu.de} \institute{Ludwig-Maximilians-Universität\\ Munich, Germany}}
\titlerunning{Deflate in Coq}
\authorrunning{C. Senjak \& M. Hofmann}
\begin{document}
\maketitle

\begin{abstract}
  The widely-used compression format ``Deflate'' is defined in RFC 1951 and is based on prefix-free codings
  and backreferences. There are unclear points about the way these codings are specified, and several sources
  for confusion in the standard. We tried to fix this problem by giving a rigorous mathematical specification,
  which we formalized in Coq. We produced a verified implementation in Coq which achieves competitive
  performance on inputs of several megabytes. In this paper we present the several parts of our
  implementation: a fully verified implementation of canonical prefix-free codings, which can be used in other
  compression formats as well, and an elegant formalism for specifying sophisticated formats, which we used to
  implement both a compression and decompression algorithm in Coq which we formally prove inverse to each
  other -- the first time this has been achieved to our knowledge. The compatibility to other Deflate
  implementations can be shown empirically. We furthermore discuss some of the difficulties, specifically
  regarding memory and runtime requirements, and our approaches to overcome them.
  \keywords{Formal Verification $\cdot$ Program Extraction $\cdot$ Compression $\cdot$ Coq}
\end{abstract}

\section{Introduction}
It is more and more recognized that traditional methods for maintenance of software security reach their
limits, and different approaches become inevitable\xcite{darpa}. At the same time, formal program verification
has reached a state where it becomes realistic to prove correctness of low-level system components and combine
them to prove the correctness of larger systems. A common pattern is to have a kernel that isolates parts of
software by putting them in sandboxes. This way, one gets strong security guarantees, while being able to use
unverified parts which might fail, but cannot access memory or resources outside their permissions. Examples
are the L4 verified kernel\xcite{Klein_AEMSKH_14} and the Quark browser\xcite{jang2012establishing}.

This is an important step towards fully verified software, but it is also desirable to verify the low-level
middleware. While for these components the adherence of access restrictions would be assured by an underlying
sandbox, functional correctness becomes the main concern. The CompCert compiler is such a project, and as
\cite{Leroy-Compcert-CACM} points out, a compiler bug can invalidate all guarantees obtained by formal
methods. The MiTLS\xcite{mitls} project implements TLS, and verifies cryptographic security properties. We
propose to add to this list a collection of compression formats; in this paper we look specifically at
Deflate\xcite{rfc1951}, which is a widely used standard for lossless general purpose compression. HTTP can
make use of it\xcite{rfc2616}, so does ZIP\xcite{zipspec633} and with it its several derived formats like Java
Archives (JAR) and Android Application Packages (APK). Source-code-tarballs are usually compressed with GZip,
which is a container around a Deflate stream\xcite{rfc1952}. Finally, TLS supports Deflate
compression\xcite{rfc3749}, though it is now discouraged due to the BREACH family of
exploits\xcite{breach}. Deflate compression can utilize Huffman codings and Backreferences as used in
Lempel-Ziv-Compression (both defined later), but none of them are mandatory: The way a given file can be
compressed is by no means unique, making it possible to use different compression algorithms. For example, the
\texttt{gzip(1)} utility has flags \texttt{-1} through \texttt{-9}, where \texttt{-9} yields the strongest but
slowest compression, while \texttt{-1} yields the weakest but fastest compression. Furthermore, there are
alternative implementations like Zopfli\xcite{zopfli}, which gains even better compression at the cost of
speed.

It is desirable to have some guarantees on data integrity, in the sense that the implementation itself will
not produce corrupted output. A common complaint at this point is that you can get this guarantee by just
re-defining your unverified implementations of compression, say $c$, and decompression, say $d$, by
\begin{eqnarray*}
  c'x &=& \left\{\begin{array}{rl} (\top, cx) & \mbox{ for } d(cx) = x \\ (\bot, x) & \mbox{ otherwise} \end{array}\right. \\
  d'x &=& \left\{\begin{array}{rl} dy & \mbox{ for } x = (\top, y) \\ y & \mbox{ for } x = (\bot, y)\end{array}\right.
\end{eqnarray*}
This works well as long as one only has to work with one computer architecture. However, for secure
long-term-archiving of important data, this is not sufficient: It is not clear that there will be working
processors being able to run our $d$ implementation in, say, 50 years; but a formal, mathematical,
human-readable specification of the actual data format being used can mitigate against such digital
obsolescence: The language of mathematics is universal. However, of course, this is a benchmark one should
keep in mind. We are currently still far away from this performance level, but we are sure our work can lead
to such a fast implementation, but not without lots of micro-optimization; for now the performance is
acceptable but not fast enough yet, we are working on making it better.

Of course, one needs {\it some} specification. Besides having to rely on some hardware specification, as
pointed out in\xcite{Klein_AEMSKH_14}, finding the right formal specification for software is not trivial. In
MiTLS\xcite{mitls}, an example about ``alert fragmentation'' is given, which might give an attacker the
possibility to change error codes by injection of a byte. This is standard compliant, but obviously not
intended. A rigorous formal specification of an informally stated standard must be carefully crafted, and we
consider our mathematical specification of Deflate as a contribution in this direction.

\subsection{Related Work}
To our best knowledge, this is the first verified pair of compression and decompression algorithms, and it is
practically usable, not just for toy examples. However, there have been several projects that are related. A
formalization of Huffman's algorithm can be found in\xcite{Blanchette:2009:PPM:1541694.1541701}
and\xcite{thery2004formalising}. As we will point out in Section \ref{section:DeflateCodings}, the codings
Deflate requires do not need to be Huffman codings, but they need to satisfy a canonicity condition. From the
general topic of data compression, there is a formalization of Shannon's theorems in Coq\xcite{Affeldt2014}.

There are two techniques in Coq that are commonly regarded as ``program extraction'': On the one hand, one can
explicitly write functions with Coq, and prove properties about them, and the extract them to OCaml and
Haskell. This is the method that is usually used. The complexity of the extracted algorithms can be estimated
easily, but reasoning about the algorithms is disconnected from the algorithms themselves. On the other hand,
it is possible to write constructive existence proofs and extract algorithms from these proofs directly. The
advantage of this approach is that only a proof has to be given, which is usually about as long as a proof
about an explicit algorithm, so the work only has to be done once. However, the disadvantage is that the
complexity of the generated algorithm is not always obvious, especially in the presence of tactics. We think
that this technique fits well especially for problems in which either the algorithm itself is complicated,
because it usually has lots of invariants and proofs of such an algorithm require extensive use of the inner
structure of the terms, or when the algorithm is trivial but the proofs are long. The case study
\cite{Nogin_writingconstructive}, albeit on a different topic (Myhill-Nerode), is an interesting source of
inspiration in that it distills general principles for improving efficiency of extracted programs which we
have integrated where applicable. In particular, these were
\begin{itemize}
\item to use expensive statements non-computationally, which we have done in large parts of the code.
\item to use existential quantifiers as memory, which we did, for example, in our proofs regarding strong
  decidability (see Section \ref{section:StrongUniquenessAndDecidability}).
\item to calculate values in advance, which we did, for example, for the value \texttt{fixed\_lit\_code}.
\item to turn loop invariants into induction statements, which is not directly applicable because Coq does not
  have imperative loops, but corresponds to Coq's induction measures, which give a clue about the
  computational complexity.
\end{itemize}
We use both extraction techniques in our code. Besides the use of recursion operators instead of pattern
matching, the extracted code is quite readable.

Our theory of parsers from Section \ref{section:StrongUniquenessAndDecidability} follows an idea similar to
\cite{berger2015program}, trying to produce parsers directly from proofs, as opposed to other approaches, for
example \cite{danielsson2010total}, which defines a formal semantic on parser combinators. Most of the
algorithms involved in parsing are short, and therefore, as we already said, using the second kind of program
extraction we mentioned was our method of choice for the largest part.

\subsection{Overview}
In summary, this paper provides a rigorous formal specification of Deflate and a reference implementation of
both a compression and decompression algorithm which have been formally verified against this specification
and tested against the ZLib.

This paper is organized as follows: In Section \ref{section:TheEncodingRelation}, we give a very brief
overview over several aspects of the Deflate standard. In Section \ref{section:DeflateCodings} we introduce
concepts needed to understand the encoding mechanism of Deflate that is mostly used, namely {\it Deflate
  codings}, a special kind of prefix-free codings, and prove several theorems about them. In Section
\ref{section:Backreferences}, we will introduce the concept of {\it backreferences} which is the second
compression mechanism besides prefix-free codings that can be used with Deflate. Section
\ref{section:StrongUniquenessAndDecidability} is about our mechanism of specifying and combining encoding
relations, and how one can gain programs from these. Section \ref{section:Compression} will introduce our
current approach for a verified compression algorithm. Finally, Section \ref{fwork} explains how our software
can be obtained, compiled and tested. \arxiv{

  The final publication is available at \href{http://link.springer.com}{link.springer.com}.} \noarxiv{We
  published a version of this paper with an appendix containing an elaborate example of the Deflate
  compression standard, some benchmarks, and some explanatory tables, on Arxiv.}

\section{The encoding relation}\label{section:TheEncodingRelation}
The main problem when verifying an implementation of a standard is that a specification must be given, and
this specification itself is axiomatic and cannot be formally verified. We address this problem in two
ways. First, we try to put the complexity of the implementation into the proofs, and make the specification as
simple as possible. The correctness of a specification should be ``clear'' by reading, or at least require
only a minimal amount of thinking. This was not always possible, because the Deflate standard is intricate; in
the cases when it was not possible, we tried to at least put the complexity into few definitions and reuse
them as often as possible. In fact, most of our definitions in {\tt EncodingRelation.v} should be easily
understandable when knowing the standard. In addition to that, we give some plausibility checks in the form of
little lemmas and examples which we formally prove. Secondly, we prove a decidability property for our
encoding relation which yields---by program extraction---a reference implementation that we can apply to
examples.  This way, the implementation becomes empirically correct. However, even if there was a pathological
example in which our specification is not compliant with other implementations, it would still describe a
formally proved lossless compression format, and every file that was compressed with one of our verified
compression algorithms could still be decompressed with every verified decompression algorithm.

On the toplevel, a stream compressed with Deflate is a sequence of blocks, each of which is a chunk of data
compressed with a specific method. There are three possible types of blocks: uncompressed blocks, which save a
small header and the clear text of the original, statically compressed blocks, which are compressed with
codings defined in the standard, and dynamically compressed blocks, which have codings in their header. Their
respective type is indicated by a two-bit header. Furthermore, a third bit indicates whether the block is the
last block. The bit-level details of the format are not important for this paper, most of the relational
definition can be found in the file {\tt EncodingRelation.v}. For clarity, we give an informal illustration of
the toplevel format:\\

\texttt{
  \begin{tabular}{lcl}
    Deflate & ::= & ('0' Block)* '1' Block ( '0' | '1' )* \\
    Block   & ::= & '00' UncompressedBlock | \\
            &     & '01' DynamicallyCompBl | \\
            &     & '10' StaticallyCompBl \\
    UncompressedBlock & ::= & length \~{}length bytes \\
    StaticallyCompBl & ::= & CompressedBlock(standard coding) \\
    DynamicallyCompBl & ::= & header coding CompressedBlock(coding)\\
    CompressedBlock(c) & ::= & [\^{}256]* 256 (encoded by c)
\end{tabular}}\\

\noindent Compressed blocks can contain backreferences -- instructions to copy already decompressed bytes to
the front -- which are allowed to point across the borders of blocks, see Section
\ref{section:Backreferences}. A decompression algorithm for such blocks must, besides being able to resolve
backreferences, be able to decompress the data according to two codings, where some of the codes have
additional suffixes of a number of bits defined in a table\arxiv{ (see Appendix
  \ref{appendix:tables})}. Additionally, for dynamically compressed blocks, the codings themselves, which are
saved as sequences of numbers (see Section \ref{section:DeflateCodings}), are compressed by a third
coding. This makes decompression of such blocks a lot harder than one would expect, and gives a broad vector
for common bugs like off-by-one-errors or misinterpretations of the standard. For example, notice that while
the first table from \arxiv{Appendix \ref{appendix:tables}}\noarxiv{the first table in Section 3.2.5 from the
  standard \cite{rfc1951}} looks quite ``continuous'', the codepoint 284 can only encode 30 code lengths,
which means that the suffixes 01111 and 11111 are illegal (this was actually a bug in an early version of our
specification). Due to the space restrictions, we will not get deeply into the standard in this paper, and
spare the readers the complicated parts as far as possible. For a deeper understanding, we give an elaborate
example in \arxiv{Appendix \ref{appendix:overview}}\noarxiv{the Arxiv-version of this paper} and
otherwise refer to\xcite{rfc1951}.

\section{Deflate Codings}\label{section:DeflateCodings}
Deflate codings are the heart of Deflate. Everything that is compressed in any way will be encoded by two
Deflate codings, even if the coding itself is not used to save memory (this will usually be the case for
statically compressed blocks which only utilize backreferences). In other literature, Deflate codings are also
called {\bf canonical prefix-free codings} -- ``canonical'' because of the result shown in Theorem
\ref{uniqueness}, ``prefix-free'' will be defined in Definition \ref{deflatecoding}. Sometimes people talk
about ``codes'' instead of ``codings''. However, in our terminology, a ``code'' is a sequence of bits from a
coding, and a ``coding'' is a map from an alphabet into codes. Though we call them Deflate codings, they are
also used in many other compression formats, like BZip2, and this part of our implementation can be reused.

It is well-known\xcite{huffman} that for every string $A\in{\cal A}^*$ over a finite alphabet ${\cal A}$,
there is a {\bf Huffman coding} $h : {\cal A} \rightarrow \{0,1\}^*$, which is a prefix-free coding such that
the concatenation of the characters' images $\operatorname{foldl} (\concat) \nil (\map h A)$ has minimal length. In
fact, this has already been formally proved\xcite{Blanchette:2009:PPM:1541694.1541701}. The standard
\cite{rfc1951} abuses terminology slightly by calling any not necessarily optimal prefix-free coding ``Huffman
coding''. This makes sense because, especially for statically compressed blocks, fixed, not necessarily
optimal encodings are used. On the other hand, the standard specifies canonical prefix-free codings which can
be uniquely reconstructed from the respective code lengths for each encoded character. These canonical codings
are referred to as Deflate codings. Therefore, instead of expensively saving a tree structure, it is
sufficient to save the sequence of code lengths for the encoded characters. Optimal Deflate codings are also
known as {\bf canonical Huffman codings}.

In any practical case, there will be a canonical ordering on ${\cal A}$, so from now on, let us silently
assume the alphabet ${\cal A}=\{0,\ldots,n-1\}$ for some $n\in\mathbb{N}$. We say a code $a$ is a {\bf prefix}
of $b$ and write $a\prefix b$, if there is a list $c\in\{0,1\}^*$ such that $a\concat\,c=b$. Notice that
$\prefix$ is reflexive, transitive and decidable. We denote the standard {\bf lexicographical ordering} on
$\{0,1\}^*$ by $\lex$. We have $\nil\lex a$ and $0\cons a \lex 1\cons b$ for all $a, b$ and $j\cons a \lex
j\cons b$ whenever $a \lex b$. It is easy to show that this is a decidable total ordering relation. We can now
make prefix-free codings unique. The code $\nil$ is used to denote that the corresponding element of $\cal A$
does not occur. This is consistent with the standard that uses the code length $0$ to denote this.

\begin{definition}\label{deflatecoding}
  A {\bf Deflate coding} is a coding $\lceil\cdot\rceil:{\cal A}\rightarrow\{0,1\}^*$ which satisfies the
  following conditions:
  \begin{itemize}
  \item[1.] $\lceil\cdot\rceil$ is prefix-free, except that there may be codes of length
    zero: \[\forall_{a,b}.(a\neq b\wedge \lceil a\rceil\neq\nil)\rightarrow\lceil a\rceil\not\prefix \lceil
    b\rceil\]
  \item[2.] Shorter codes lexicographically precede longer codes: \[\forall_{a,b}.\len \lceil a\rceil < \len \lceil
    b\rceil\rightarrow\lceil a\rceil\lex \lceil b\rceil\]
  \item[3.] Codes of the same length are ordered lexicographically according to the order of the characters
    they encode: \[\forall_{a,b}.(\len\lceil a\rceil=\len\lceil b\rceil\wedge a\le b)\rightarrow \lceil
    a\rceil\lex \lceil b\rceil\]
  \item[4.] For every code, all lexicographically smaller bit sequences of the same length are prefixed by
    some code: \[\forall_{a\in{\cal A},l\in\{0,1\}^+}.(l\lex\lceil a\rceil\wedge\len l=\len\lceil
    a\rceil)\rightarrow \exists_b.\lceil b\rceil\neq\nil\wedge\lceil b\rceil\prefix l\]
  \end{itemize}
\end{definition}
These axioms are our proposed formalization of the informal specification
in\xcite{rfc1951}, which states: ``The Huffman codes used for each alphabet in the `deflate' format have two
additional rules:
\begin{itemize}
  \item All codes of a given bit length have lexicographically consecutive values, in the same order as the
    symbols they represent;
  \item Shorter codes lexicographically precede longer codes.''
\end{itemize}
Notice that prefix-codes as given by their code lengths do not necessarily correspond to optimal,
i.e. Huffman, codes. For example, the Deflate coding
\[  0 \rightarrow [0], 1 \rightarrow [1, 0, 0], 2 \rightarrow [1, 0, 1], 3 \rightarrow [1, 1, 0] \]
is clearly not a Huffman coding, since for every case it would apply to, we could also use
\[  0 \rightarrow [0], 1 \rightarrow [1, 0], 2 \rightarrow [1, 1, 1], 3 \rightarrow [1, 1, 0] \]
which will always be better. Unique recoverability, however, holds true for all Deflate codings irrespective
of optimality.

Axiom 3 is weaker than the first axiom from\xcite{rfc1951}, as it does not postulate the consecutivity of the
values, which is ensured by axiom 4: Assuming you have characters $a<b$ such that $\len\lceil
a\rceil=\len\lceil b\rceil$, and there is a $l\in\{0,1\}^{\len\lceil a\rceil}$ such that $\lceil a\rceil\lex
l\lex\lceil b\rceil$, then by axiom 4 there is a $d$ such that $\lceil d\rceil\prefix l$. Trivially, $\lceil
a\rceil\lex\lceil d\rceil$, therefore by axiom 2, it follows that $\lceil d\rceil = l$. That is, if there is a
code of length $\len\lceil a\rceil$ between $\lceil a\rceil$ and $\lceil b\rceil$, then it is the image of a
character. Therefore, the values of codes of the same length are lexicographically consecutive.

Furthermore, consider our non-optimal coding from above: It has the lengths $0\rightarrow 1,1\rightarrow
3,2\rightarrow 3,3\rightarrow 3$, and satisfies our axioms 1-3, and additionally, the codes of the same length
have lexicographically consecutive values. But the same holds for the coding
\[  0 \rightarrow [0], 1 \rightarrow [1, 0, 1], 2 \rightarrow [1, 1, 0], 3 \rightarrow [1, 1, 1] \]
However, in this coding, there is a ``gap'' between the codes of different lengths, namely between $[0]$ and
$[1,0,1]$, and that is why it violates our axiom 4: The list $[1,0,0]$ is lexicographically smaller than
$[1,0,1]$, but it has no prefix.

We can show that Deflate codings are uniquely determined by their code lengths:

\begin{theorem}[uniqueness]\label{uniqueness} Let $\lceil\cdot\rceil,\lfloor\cdot\rfloor:{\cal A}\rightarrow\{0,1\}^*$ be two Deflate codings, such that $\forall_{x\in{\cal A}}.\len\lceil x\rceil=\len\lfloor x\rfloor$. Then $\forall_{x\in{\cal A}}.\lceil x\rceil=\lfloor x\rfloor$.
\end{theorem}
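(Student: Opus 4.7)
The plan is to prove $\lceil x\rceil = \lfloor x\rfloor$ by strong induction on a well-ordering of the characters whose code is nonempty (characters with code $\nil$ are trivially determined since $\nil$ is the unique length-$0$ string). Define $a \prec b$ iff $\len\lceil a\rceil < \len\lceil b\rceil$, or the two lengths coincide and $a < b$; since the length functions agree by hypothesis, this order is the same under either coding. The target statement that I establish by induction is that the codes, in $\prec$-order, must follow the canonical ``binary-successor'' rule of RFC~1951 section~3.2.2: each code is obtained from its predecessor by adding one in binary and then, whenever the length increases from $l'$ to $l$, appending $l - l'$ trailing zeros.

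For the base case, let $a_1$ be the $\prec$-minimum character with nonempty code and put $l := \len\lceil a_1\rceil \geq 1$. I claim $\lceil a_1\rceil$ is the all-zero string of length $l$, which I write $0^l$. If not, then $0^l \lex \lceil a_1\rceil$ strictly, so axiom~4 yields $b$ with $\lceil b\rceil \neq \nil$ and $\lceil b\rceil \prefix 0^l$; this forces $\lceil b\rceil = 0^k$ for some $1 \le k \le l$. Axioms~2 and~3, combined with the minimality of $a_1$, then force $b = a_1$ and $k = l$, contradicting the assumption. The same argument in the other coding gives $\lfloor a_1\rfloor = 0^l$.

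For the inductive step, fix $i \geq 2$, let $p := \lceil a_{i-1}\rceil = \lfloor a_{i-1}\rfloor$ (by the induction hypothesis) of length $l' := \len p$, and set $l := \len\lceil a_i\rceil \geq l'$. Define the canonical next code $s^\ast$ to be the binary successor of $p$ at length $l$ when $l = l'$, and $(p+1)\concat 0^{l-l'}$ when $l > l'$; that $p + 1$ does not overflow (so $s^\ast$ is well-defined) can be deduced from axioms~1 and~4, since a ``saturated'' length-$l'$ frontier $p = 1^{l'}$ would leave no length-$l$ string free from $p$ as a prefix, contradicting the existence of $\lceil a_i\rceil$. The claim $\lceil a_i\rceil = s^\ast$ then splits into two halves. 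One direction, $s^\ast \lex \lceil a_i\rceil$: any length-$l$ string strictly between $p$ and $s^\ast$ in the lexicographic order must extend $p$ as a prefix, but axiom~1 together with $a_i \neq a_{i-1}$ forbids that for $\lceil a_i\rceil$. The other direction, $\lceil a_i\rceil \lex s^\ast$: if not, axiom~4 applied to $s^\ast$ produces $b$ with $\lceil b\rceil \neq \nil$ and $\lceil b\rceil \prefix s^\ast$, and a case analysis on $k := \len\lceil b\rceil$ relative to $l'$ and $l$ derives a contradiction from axiom~1, axiom~2, or the fact that $a_{i-1}$ is the immediate $\prec$-predecessor of $a_i$. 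The same reasoning applied to $\lfloor\cdot\rfloor$ yields $\lfloor a_i\rfloor = s^\ast$.

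The main obstacle is the direction $\lceil a_i\rceil \lex s^\ast$: the case analysis on $k$ is intricate because one must track the position $t$ where $p$ and $p+1$ first disagree (the location of the rightmost $0$ bit of $p$) in order to see that the prefixes of $s^\ast$ of length $k \in \{t, \ldots, l'\}$ are lex-\emph{greater} than the corresponding prefixes of $p$, which combined with axiom~2 violates the monotonicity it demands. Handling the transitions $l = l'$ and $l > l'$ uniformly needs some bookkeeping, but each sub-case collapses cleanly once the right axiom is invoked.
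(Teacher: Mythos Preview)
Your proof is correct, but it follows a genuinely different route from the paper's.  The paper argues by contradiction: assuming the two codings differ, it picks $n$ with $\lceil n\rceil$ lex-minimal among the discrepancy points and $m$ with $\lfloor m\rfloor$ lex-minimal likewise, shows via axiom~2 and minimality that $\len\lceil n\rceil=\len\lfloor m\rfloor$ and that both are nonempty, then (taking WLOG $\lceil n\rceil\lex\lfloor m\rfloor$) applies axiom~4 \emph{in the $\lfloor\cdot\rfloor$-coding} to produce some $b$ with $\lfloor b\rfloor\prefix\lceil n\rceil$; the minimality of $m$ together with axiom~1 then finishes in a couple of lines.  There is no induction, no successor rule, and no tracking of the bit position $t$.

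Your approach instead reconstructs the whole coding by strong induction along $\prec$, showing that each nonempty code is forced to equal the canonical binary-successor $s^\ast$ of its $\prec$-predecessor.  This is longer and requires the case analysis on $k$ versus $t,l',l$ that you outline (the cases $k<t$, $t\le k\le l'$, $l'<k\le l$ each collapsing via axiom~1, axiom~2, or the immediacy of $a_{i-1}\prec a_i$, exactly as you say).  What it buys you is an explicit description of the unique coding---essentially a second derivation of the RFC algorithm---whereas the paper's argument only rules out a discrepancy.  The paper's proof is markedly shorter for pure uniqueness; yours is more informative and closer in spirit to the existence proof that follows it.  One small gap to close when you write it out: before speaking of ``strictly between $p$ and $s^\ast$'' you should record that $p\lex\lceil a_i\rceil$, which is immediate from axioms~2 and~3 together with $a_{i-1}\prec a_i$; you rely on this implicitly.
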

\begin{proof}
Equality of codings is obviously decidable, therefore we can do a proof by contradiction, without using the
law of excluded middle as an axiom. So assume there were two distinct deflate codings $\lceil\cdot\rceil$ and
$\lfloor\cdot\rfloor$ with $\len\lceil \cdot\rceil=\len\lfloor \cdot\rfloor$. Then there must exist $n, m$
such that $\lceil n\rceil = \min_\lex\{\lceil x\rceil\mid \lceil x\rceil \neq \lfloor x\rfloor\}$ and $\lfloor
m\rfloor = \min_\lex\{\lfloor x\rfloor\mid \lceil x\rceil \neq \lfloor x\rfloor\}$. If $\len{\lceil
  n\rceil}>\len{\lfloor m\rfloor}$, then also $\len{\lceil n\rceil}>\len{\lceil m\rceil}$, and by our axiom 2,
${\lceil m\rceil}\lex{\lceil n\rceil}$. But $m$ was chosen minimally. Symmetric for $\len{\lceil
  n\rceil}>\len{\lfloor m\rfloor}$. Therefore, $\len{\lceil n\rceil}=\len{\lfloor m\rfloor}$. Also, $\lfloor
m\rfloor\neq\nil$, because otherwise $0=\len{\lfloor m\rfloor}=\len{\lceil n\rceil}$, so $\lceil
n\rceil=\nil$, and so $\lceil m\rceil=\lfloor m\rfloor$, which contradicts our assumption on the choice of
$m$. Analogous, $\lceil n\rceil\neq\nil$. By totality of $\lex$, we know that $\lceil n\rceil\lex\lfloor
m\rfloor\vee\lfloor m\rfloor\lex\lceil n\rceil$. Both cases are symmetric, so without loss of generality
assume $\lceil n\rceil\lex\lfloor m\rfloor$. Now, by axiom 4, we know that some $b$ exists, such that $\lfloor
b\rfloor\prefix\lceil n\rceil$, therefore by axiom 2, $\lfloor b\rfloor\lex\lfloor m\rfloor$, and thus, by the
minimality of $m$, either $b = m$ or $\lfloor b\rfloor = \lceil b\rceil$. $b=m$ would imply $\lceil m\rceil =
\lfloor m\rfloor$, which contradicts our choice of $m$. But $\lfloor b\rfloor = \lceil b\rceil$ would imply
$\lceil b\rceil\prefix\lceil n\rceil$, which contradicts our axiom 1.
\end{proof}
This theorem is proved as Lemma \texttt{uniqueness} in \texttt{DeflateCoding.v}. While uniqueness is a
desirable property, it does not give us the guarantee that, for every configuration of lengths, there actually
is a Deflate coding. And in fact, there isn't: Trivially, there is no Deflate coding that has three codes of
length $1$. It is desirable to have a simple criterion on the list of code lengths, that can be efficiently
checked, before creating the actual coding.

Indeed, the well-known Kraft inequality\xcite{1056818} furnishes such a criterion. It asserts that a
prefix-free coding with code lengths $k_0,\ldots,k_{N-1}$ exists iff $$\sum_{i=0}^{N-1} 2^{-k_i}\le 1$$
Deflate codings may, however, have $k_i = 0$ if the corresponding character does not occur. Moreover, we want
to extract an algorithm from this proof, so we have to prove it constructively.
\begin{theorem}[extended\_kraft\_ineq]\label{kraftd}
  Let $\lceil\cdot\rceil:{\cal A}\rightarrow\{0,1\}^*$ be a Deflate coding. Then
  $$\mathop{\sum\limits_{i\in{\cal A}}}_{\lceil i\rceil\neq\nil}2^{-\len\lceil i\rceil}\le 1$$ Equality holds
  if and only if there is some $k\in{\cal A}$ such that $\lceil k\rceil\in\{1\}^+$.
\end{theorem}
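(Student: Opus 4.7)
The plan is to prove both the inequality and the equality condition by a single cylinder-counting argument. Let $L := \max_{i \in {\cal A}} \len \lceil i \rceil$ (if all codes are empty, both sides are $0$ and the theorem is trivial, so I assume some code is nonempty). For each character $i$ with $\lceil i\rceil \neq \nil$, I associate the cylinder $S_i := \{ w \in \{0,1\}^L : \lceil i\rceil \prefix w \}$, a subset of $\{0,1\}^L$ with $|S_i| = 2^{L - \len \lceil i\rceil}$. By axiom 1 (prefix-freeness among nonempty codes), the $S_i$ are pairwise disjoint, so $\sum_i |S_i| \le 2^L$; dividing through by $2^L$ yields the Kraft inequality.

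Equality in the above chain is equivalent to the $S_i$ exactly tiling $\{0,1\}^L$. Using the easy induction that $1^L$ is the $\lex$-maximum of $\{0,1\}^L$, the forward direction falls out immediately: if the $S_i$ cover $\{0,1\}^L$, then $1^L \in S_k$ for some $k$, so $\lceil k\rceil$ is a nonempty prefix of $1^L$, hence an element of $\{1\}^+$.

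The converse direction is where I expect the main obstacle. Assuming $\lceil k \rceil \in \{1\}^+$ and setting $L := \len \lceil k \rceil$, I need (a) that no code exceeds length $L$ and (b) that every $w \in \{0,1\}^L$ lies in some $S_i$. For (a), if some $\lceil a \rceil$ had length $> L$, then its first $L$ bits $p$ are either $1^L = \lceil k \rceil$ — giving $\lceil k \rceil \prefix \lceil a \rceil$ with $k \neq a$ (the code lengths differ), violating axiom 1 — or $p \lex 1^L$ strictly, in which case axiom 4 produces some $b$ with $\lceil b \rceil \prefix p \prefix \lceil a \rceil$ (and $b \neq a$ again by length), again violating axiom 1. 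For (b), given $w \in \{0,1\}^L$, either $w = 1^L$ and $k$ witnesses coverage directly, or $w \lex \lceil k \rceil$ strictly and axiom 4 applied with character $k$ and string $w$ supplies the required nonempty code. With (a) and (b) in hand, the disjoint cylinders $S_i$ exactly tile $\{0,1\}^L$, forcing $\sum_{i : \lceil i\rceil \neq \nil} 2^{L-\len\lceil i\rceil} = 2^L$ and hence Kraft equality. The subtle part throughout is ruling out the edge cases $b=a$ or codes exceeding length $L$; axioms 1, 2, and 4 work in tandem here, with axiom 4 providing the crucial covering property while axiom 1 prevents any code from reaching beyond the $1^L$ barrier.
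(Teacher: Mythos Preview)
Your argument is correct. The paper does not present a written proof of this theorem; it simply states that the result is formally proven as \texttt{extended\_kraft\_ineq} in \texttt{DeflateCoding.v}, so there is no paper proof to compare against line by line. Your cylinder-counting argument is the standard route to Kraft's inequality, and your use of axiom~4 for the equality characterization is exactly right: the ``no gaps below'' property is precisely what forces the cylinders to tile $\{0,1\}^L$ once an all-ones code is present, and axiom~1 is what keeps any code from protruding past the $1^L$ barrier. The edge cases you flag (ruling out $b=a$ by length, handling $w=1^L$ separately) are handled correctly.
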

This is formally proven as {\tt extended\_kraft\_ineq} in {\tt DeflateCoding.v}. The most important theorem
regarding Deflate codings is:
\begin{theorem}[existence]\label{existence}
  Let $l:{\cal A}\rightarrow\mathbb{N}$ be such that $$\mathop{\sum\limits_{i\in{\cal A}}}_{l(i)\neq
    0}2^{-l(i)}\le 1$$ Then there is a Deflate coding $\lceil\cdot\rceil:{\cal A}\rightarrow\{0,1\}^*$ such
  that $l x = \len\lceil x\rceil$.
\end{theorem}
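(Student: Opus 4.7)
The plan is a direct constructive proof which explicitly builds the canonical coding associated to $l$; together with Theorem~\ref{uniqueness} this suffices for existence and, more importantly for the paper, it extracts to an algorithm. Let $L=\max_x l(x)$, and for each $k\in\{1,\dots,L\}$ let $N_k=|\{x\in{\cal A}\mid l(x)=k\}|$ and $s_k=\sum_{j=1}^{k-1}N_j\cdot 2^{k-j}$, with $s_1=0$. The Kraft inequality rearranges to $s_k+N_k\le 2^k$ for every $k$, so the integers in $\{s_k,\dots,s_k+N_k-1\}$ are all representable as $k$-bit strings, and the recurrence $s_{k+1}=2(s_k+N_k)$ is immediate.

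First I would enumerate the characters with $l(x)>0$ in increasing order of the pair $(l(x),x)$ and assign to the $j$-th character of length $k$ (counted from $1$) the $k$-bit big-endian binary representation of $s_k+j-1$; characters with $l(x)=0$ receive $\nil$. Equivalently, this is the well-known ``increment-then-pad'' algorithm: start with $l(x_1)$ zeros, and at each step add one to the previous code as a binary integer of its current length, then append trailing zeros if the next length is larger. The Kraft bound is exactly what ensures that neither of those two steps ever overflows.

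Verifying the four axioms of Definition~\ref{deflatecoding} is then straightforward, writing $v(\cdot)$ for integer value. Axiom~3 is immediate: within a fixed length the codes are assigned to characters in the order of $\cal A$, and integer order on fixed-width binary strings coincides with $\lex$. For Axiom~2, the last code of length $k_1$ has $v=s_{k_1}+N_{k_1}-1$, while the top $k_1$ bits of any code of length $k_2>k_1$ represent an integer of size at least $s_{k_2}/2^{k_2-k_1}\ge s_{k_1}+N_{k_1}$, so the shorter code differs from any longer code at some position in the first $k_1$ bits where the shorter one has a $0$ and the longer a $1$. For Axiom~1, if $\lceil a\rceil$ of length $k_1$ were a prefix of $\lceil b\rceil$ of length $k_2$, then $v(\lceil b\rceil)<(v(\lceil a\rceil)+1)\cdot 2^{k_2-k_1}\le (s_{k_1}+N_{k_1})\cdot 2^{k_2-k_1}\le s_{k_2}\le v(\lceil b\rceil)$, a contradiction. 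For Axiom~4, given $\lceil a\rceil$ of length $k$ and an $m\in\{0,1\}^k$ strictly lex-below it with $v(m)<s_k+N_k$: if $v(m)\ge s_k$ then $m=\lceil b\rceil$ for some length-$k$ character; if $v(m)<s_k$, a telescoping of the recurrence exhibits $[0,s_k)$ as the disjoint union of the intervals $[s_{k'}\cdot 2^{k-k'},(s_{k'}+N_{k'})\cdot 2^{k-k'})$ for $1\le k'<k$, so $v(m)$ lies in one of them and the top $k'$ bits of $m$ are the code of some length-$k'$ character, prefixing $m$ as required.

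The main obstacle I expect is not the combinatorial argument, which is standard, but packaging the construction as a primitive-recursive function over the list of lengths so that the algorithm extracts cleanly. In particular, the invariants $s_k+N_k\le 2^k$ and $s_{k+1}=2(s_k+N_k)$ have to be maintained algorithmically and connected back to the hypothesized Kraft inequality inside the recursion, and the translation between the integer arithmetic used in the verification and the bit-list representation of codes has to be handled carefully, especially at lengths $k$ where $N_k=0$ and the ``increment-then-pad'' step jumps several lengths at once.
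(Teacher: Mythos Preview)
Your argument is correct and is essentially the algorithm described in the RFC itself: compute the base values $s_k$ from the length multiplicities, assign codes by integer arithmetic, and check the four axioms via the inequalities $s_k+N_k\le 2^k$ and the recurrence $s_{k+1}=2(s_k+N_k)$. The telescoping union that handles Axiom~4 is neat and makes the ``no gaps'' property transparent.

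The paper takes a different route. Instead of computing all codes at once via integers, it sorts the pairs $(x,l(x))$ by $(l(x),x)$ and walks through them, maintaining at every step a partial map $c$ that is already a Deflate coding on the characters processed so far. The induction step appeals to the extended Kraft inequality (Theorem~\ref{kraftd}) in the other direction: since the partial coding still has strict Kraft slack, the all-ones word of the current maximal length is not in its image, hence a lexicographically next word $d'$ exists, and the new code is $d'$ padded with zeros to the required length. The invariants carried along are about the list bookkeeping ($S$, $R$, their concatenation) and the fact that the most recently assigned code dominates all earlier ones, rather than arithmetic identities on the $s_k$. The paper explicitly contrasts the two algorithms after the proof: yours is closer to the standard's imperative description and lends itself to bit-shifting on machine integers; theirs stays entirely at the level of bit-lists, avoids any integer/bit-string translation layer, and keeps ``$c$ is a Deflate coding'' as a loop invariant, which is convenient for a Coq development where that predicate is the thing one ultimately needs. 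Your anticipated obstacle---bridging the integer arithmetic with the list representation inside the recursion---is precisely what their design sidesteps.
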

For the proof, we introduce the notation $[n]^k := [\underbrace{n,\ldots,n}_{k\times}]$.

\begin{proof}
  Let $\lesssim$ be the right-to-left lexicographical ordering relation on $\mathbb{N}$, defined
  by \[\forall_{mqo}.q<o\rightarrow(q,m)\lesssim(o,m)\]
  \[\forall_{m_1,m_2,n_1,n_2}.m_1<m_2\rightarrow(n_1,m_1)\lesssim(n_2,m_2)\]
  Now let $R=L:=\operatorname{sortBy} (\lesssim) (\map (\lambda_k(k,lk)) [0,\ldots,n-1])$, $S=\nil$,
  $cx=\nil$. We will do a recursion on tuples $(S,c,R)$, maintaining the following invariants:
  \begin{enumerate}
  \item \label{inv1} If a pair is not in the list of already processed pairs $S$, then it is in the list of
    remaining pairs $R$, and the corresponding code is empty \[\forall_q.(q,\len(c(q)))\not\in S\rightarrow
    (c(q)=\nil\wedge(q,l(q))\in R)\]
  \item \label{inv2} $L$ contains the elements of $S$ and $R$ \[(\operatorname{rev} S)\concat R = L\]
  \item \label{inv3} Either $S$ is empty, or the code corresponding to its first element is lexicographically
    larger than every code in the current coding \[S=\nil\vee\forall_q.c(q)\lex c(\pi_1(\operatorname{first}
    S))\]
  \end{enumerate}
  Furthermore, $c$ will be a Deflate coding at every step. The decreasing element will be $R$, which will
  become shorter at every step. We first handle the simple cases:
  \begin{itemize}
    \item For the initial values $(\nil,\lambda_x\nil,L)$, the invariants are easy to prove.
    \item For $R=\nil$, we have $\operatorname{rev} S=L$ by \ref{inv2} and therefore, either $c=\lambda_x\nil$ if
      $L=\nil$, or $\forall_q.(q,\len(c(q)))\in L$ by \ref{inv1}, and therefore, $c$ is the desired coding.
    \item For $R=(q,0)::R'$, $S$ can only contain elements of the form $(\_,0)$. We proceed with
      $((q,0)::S,\lambda_x\nil,R')$. All invariants are trivially preserved.
    \item For $R=(q,1+l)::R'$ and $S=\nil$ or $S=(r,0)::S'$, we set $c'(x)=[0]^{1+l}$ for $x=q$, and
      $c'(x)=\nil$ otherwise. We proceed with $((q,1+l)::S,c',R')$. The invariants are easy to show. It is
      easy to show that $c'$ is a Deflate coding.
  \end{itemize}
  The most general case is $R=(q,1+l)::R'$ and $S=(r,1+m)::S'$; let the intermediate Deflate coding $c$ be
  given. We have \[\mathop{\sum\limits_{i\in{\cal A}}}_{c(i)\neq\nil}2^{-\len(c(i))}<2^{-l-1}+
  \mathop{\sum\limits_{i\in{\cal A}}}_{c(i)\neq\nil}2^{-\len(c(i))}\le
  \mathop{\sum\limits_{i\in{\cal A}}}_{li\neq 0}2^{-l(i)}\le 1\] By Theorem \ref{kraftd}, $[1]^{1+m}\not\in\img
  c$, and therefore, we can find a fresh code $d'$ of length $1+m$. Let $d=d'\concat[0]^{l-m}$ and
  set \[c'(x):=\left\{\begin{array}{cl}d&\mbox{for }x=q\\c(x)&\mbox{otherwise}\end{array}\right.\] We have to
  show that $c'$ is a Deflate coding. The axioms 2 and 3 are easy. For axiom 4, assume $x\neq\nil$ and $x\lex
  c'(q)$. If $x\lex c'(r)$, the claim follows by axiom 4 for $r$. Otherwise, by totality $c'(r)\lex x$. If
  $x\lex d'$, by the minimality of $d'$ follows $x=c'(r)$. If $d'\lex x$, trivially, $d'\prefix c'(q)$. Axiom
  4 holds. For axiom 2, it is sufficient to show that no other non-$\nil$ code prefixes $d$. Consider a code
  $e\prefix d$. As all codes are shorter or of equal length than $d'$, $e\prefix d'$. But then, either
  $e\prefix c(r)$, or $c(r)\lex e$. Contradiction. Therefore, we can proceed with $((q,1+l)::S,c',r')$.
\end{proof}
This is proved as Lemma {\tt existence} in {\tt DeflateCoding.v}. From this, we can extract an algorithm that
calculates a coding from a sequence of lengths. For a better understanding of the algorithm proposed here, we
consider the following length function as an example:
$$ l : 0 \rightarrow 2; 1 \rightarrow 1; 2 \rightarrow 3; 3 \rightarrow 3; 4 \rightarrow 0 $$
We first have to sort the graph of this function according to the $\lesssim$ ordering.
$$ [(4, 0), (1,1), (0, 2), (2, 3), (3, 3)] $$
Then, the following six steps are necessary to generate the coding.
{\footnotesize
\begin{center}
\begin{tabular}{|>\centering m{0.70cm}||*2{>\centering m{2cm}|}|*5{m{1cm}<{\centering}|}}
\hline
 Step & R & S & c(0) & c(1) & c(2) & c(3) & c(4) \\ \hline \hline 
 0 & [(4,~0), (1,~1), (0,~2), (2,~3), (3,~3)] & \nil & \nil & \nil & \nil & \nil & \nil \\ \hline
 1 & [(1,~1), (0,~2), (2,~3), (3,~3)] & [(4,~0)] & \nil & \nil & \nil & \nil & \nil \\ \hline
 2 & [(0,~2), (2,~3), (3,~3)] & [(1,~1), (4,~0)] & \nil & [0] & \nil & \nil & \nil \\ \hline
 3 & [(2,~3), (3,~3)] & [(0,~2), (1,~1), (4,~0)] & [1,0] & [0] & \nil & \nil & \nil \\ \hline
 4 & [(3,~3)] & [(2,~3), (0,~2), (1,~1), (4,~0)] & [1,0] & [0] & [1,1,0] & \nil & \nil \\ \hline
 5 & \nil & [(3,~3), (2,~3), (0,~2), (1,~1), (4,~0)] & [1,0] & [0] & [1,1,0] & [1,1,1] & \nil \\ \hline
\end{tabular}
\end{center}}

The final values of $c$ are, in fact, a Deflate coding. The main difference to the algorithm in the standard
\cite{rfc1951} is that we sort the character/length pairs and then incrementally generate the coding, while
their algorithm counts the occurrences of every non-zero code length first, determines their lexicographically
smallest code, and then increases these codes by one for each occurring character. In our case, that means
that it would first generate the function $a:1 \rightarrow 1; 2 \rightarrow 1; 3 \rightarrow 2 $ and $0$
otherwise, which counts the lengths, and then define $$b(m) = \sum_{j=0}^{m-1} 2^ja(j)$$ which gets the
numerical value for the lexicographically smallest code of every length when viewed as binary number with the
most significant bit being the leftmost bit. In our case, this is $1 \rightarrow 0; 2\rightarrow 2;
3\rightarrow 6$. Then $$ c(n) = b(l(n)) + |\{r < n\mid l(r)=l(n)\}| $$ meaning $c(0) = b(2) = 10_{(2)}$, $c(1)
= b(1) = 0_{(2)}$, $c(2) = b(3) = 110_{(2)}$, $c(3) = b(3) + 1 = 111_{(2)}$ which is consistent with the
algorithm presented here. The algorithm described in the standard\xcite{rfc1951} is more desirable for
practical purposes, as it can make use of low-level machine instructions like bit shifting. On the other hand,
notice that our algorithm is purely functional.

\section{Backreferences}\label{section:Backreferences}
Files usually contain lots of repetitions. A canonical example are C files which contain lots of {\tt
  \#include} statements, or Java files which contain lots of {\tt import} statements in the beginning. Deflate
can remove these repetitions, as long as they are not more than 32 KiB\footnote{Kibibyte: $2^{10}$ Byte} apart
from each other. The mechanism uses backreferences, as found in Lempel-Ziv-compression. An extension of the
backreference mechanism also allows for run length encoding (see below). A backreference is an instruction to
copy parts of already decompressed data to the front, so duplicate strings have to be saved only once. They
are represented as a pair $\left<l,d\right>$ of a length $l$ and a distance $d$. The length is the number of
bytes to be copied, the distance is the number of bytes in the backbuffer that has to be skipped before
copying. Similar mechanisms are used in other compression formats, so our implementation can probably be used
for them, too.

The resolution (decompression) of such backreferences in an imperative context is trivial, but uses lots of
invariants that make it hard to prove correct. In a purely functional context, it is non-trivial to find data
structures that are fast enough. We decided to stick with purely functional algorithms, as they can be
verified directly using Coq, and optimization of purely functional programs is interesting for its own
sake. In our current verified implementation, this is the slowest part. \arxiv{The benchmarks in Appendix
  \ref{benchmarks} support this claim.} We already have figured out an algorithm with better performance, but
we are not yet done proving it formally correct; we will not get deeper into this algorithm in this
paper.

Assuming we wanted to compress the string

\begin{equation}
  \texttt{ananas\_banana\_batata}
\end{equation}
we could shorten it with backreferences to
\begin{equation}\label{ananas1}
  \texttt{ananas\_b}\left<5,8\right>\left<3,7\right>\texttt{tata}
\end{equation}
An intuitive algorithm to resolve such a backreference uses a loop that decreases the length and copies one
byte from the backbuffer to the front each time (the example is written in Java; notice that this algorithm,
while intuitive, is not suitable for actual use in a decompression program, because you usually do not know
the length of the output in advance, and hence cannot allocate an array of the proper length):
\begin{lstlisting}[language=Java]
int resolve (int l, int d, int index, byte[] output) {
    while (l > 0) {
      output[index] = output[index-d];
      index = index + 1; l = l - 1; }
    return index; }
\end{lstlisting}
This intuitive algorithm works when $l>d$, and results in a repetition of already written bytes -- which is
what run length encoding would do. Therefore, Deflate explicitly allows $l>d$, allowing us to shorten
(\ref{ananas1}) even further:
\begin{equation}\label{ananas2}
  \texttt{an} \left<3,2\right> \texttt{s\_b} \left<5,8\right> \left<3,7\right> \texttt{t} \left<3,2\right>
\end{equation}
More directly, the string $\texttt{aaaaaaaargh!}$ can be compressed as
$\texttt{a}\left<7,1\right>\texttt{rgh!}$, which essentially is run length encoding.

As already mentioned, the efficient resolution of backreferences in a purely functional manner was a lot
harder than we expected. An imperative implementation can utilize the fact that the distances are limited by
32 KiB, and use a 32 KiB ringbuffer in form of an array that is periodically iterated and updated in parallel
to the file-I/O. This uses stateful operations on an array, and has complicated invariants.

\subsection{A verified backreference-resolver}\label{section:rbr}
The obvious approach to do this in a purely functional way is using a map-like structure instead of an array
as a ring buffer. The best possible approach we found uses an exponential list
\begin{lstlisting}
Inductive ExpList (A : Set) : Set :=
| Enil : ExpList A
| Econs1 : A -> ExpList (A * A) -> ExpList A
| Econs2 : A -> A -> ExpList (A * A) -> ExpList A.
\end{lstlisting}
This takes into account that -- in our experience -- most backreferences tend to be ``near'', that is, have
small distances, and such elements can be accessed faster. We could just save our whole history in one {\tt
  ExpList} that we always pass around, without performance penalty. However, this will take a lot of memory
which we do not need, as backreferences are limited to 32 KiB. We use another technique which we call {\bf
  Queue of Doom}: We save two {\tt ExpList}s and memorize how many elements are in them. The front {\tt
  ExpList} is filled until it contains 32 KiB. If a backreference is resolved, and its distance is larger than
the amount of bytes saved in the front {\tt ExpList}, it is looked up in the back {\tt ExpList}. Now, if the
front {\tt ExpList} is 32 KiB large, the front {\tt ExpList} becomes the new back {\tt ExpList}, a new empty
front {\tt ExpList} is allocated, and the former back {\tt ExpList} will be doomed to be eaten by the garbage
collector. The following is an illustration of filling such a queue of doom, the {\tt ExpList}s are denoted as
lists, and their size is -- for illustration -- only 3:
\begin{eqnarray*}
  \mbox{start} & \nil & \nil \\
  \mbox{push 1} & [1] & \nil \\
  \mbox{push 2} & [2; 1] & \nil \\
  \mbox{push 3} & [3; 2; 1] & \nil \\
  \mbox{push 4} & [4] & [3; 2; 1] \qquad \nil \rightarrow \skull\\
  \mbox{push 5} & [5; 4] & [3; 2; 1]  \\
  \mbox{push 6} & [6; 5; 4] & [3; 2; 1] \\
  \mbox{push 7} & [7] & [6; 5; 4] \qquad [3; 2; 1] \rightarrow \skull \\
\end{eqnarray*}

The advantage of this algorithm is that we have a fully verified implementation in {\tt
  EfficientDecompress.v}. The disadvantage is that while it does not perform badly, it still does not have
satisfactory performance, taking several minutes\arxiv{ -- as you can see in Appendix \ref{benchmarks}}. We
are currently working on better algorithms. One such algorithm which is purely functional can be found in the
file {\tt NoRBR/BackRefs.hs} in the software directory, see Section \ref{fwork}. Another such algorithm which
utilizes diffarrays \cite{diffarray} aka Parray \cite{vafeiadis2013adjustable}, and resembles an imperative
resolution procedure, can be found in the file {\tt NoRBR/Back\-Ref\-With\-Diff\-Array.hs}. Both perform well,
and we are currently working on verifying them.

\section{Strong Decidability and Strong Uniqueness}\label{section:StrongUniquenessAndDecidability}
So far we showed how we implemented several aspects of the standard. However, this was a very high-level view:
We still need to combine the parts we implemented in the way specified in\xcite{rfc1951}. This is a lot less
trivial than it might sound: A compressed block is associated with two codings, a ``literal/length'' coding,
and a ``distance'' coding. The ``literal/length'' coding contains codes for raw bytes, a code for the end of
the block, and ``length'' codes, which initialize a backreference, and can have suffixes of several
bits\arxiv{ as specified in the table in Appendix \ref{appendix:tables}}. Such length codes and their suffix
must be followed by a ``distance'' code which can also have a suffix. Dynamically compressed blocks have an
additional header with the code-length sequences for these two codings (which are sufficient for
reconstruction of the codings, as proved in Section \ref{section:DeflateCodings}). However, these sequences
are themselves compressed by yet another mechanism that -- besides Huffman-coding -- allows for
run-length-encoding. Therefore, a third coding must be specified in the header, the ``code-length
coding''. Uncompressed blocks, on the other hand, must start at a byte-boundary, which means that when
specifying, we cannot even forget the underlying byte sequence and just work on a sequence of bits.

We could have written a decompression function as specification, but there are several possible algorithms to
do so, which we would have to prove equivalent. We decided that a relational specification is clearer and
easier to use, and probably also easier to port to other systems (Isabelle, Agda, Minlog) if desired. We
defined two properties that such relations must have to be suitable for parsing, which we will define in this
section.

While efficiency in runtime and memory are desirable properties, the most important property of a lossless
compression format is the guarantee that for any given data $d$,
$\operatorname{decompress}(\operatorname{compress} d)=d$, which is what our final implementation
guarantees. While most container formats have some checksum or error correction code, Deflate itself does not
have mechanisms to cope with data corruption due to hardware failures and transcription errors, therefore a
formal discussion of these is outside the scope of this paper; research in this direction can be found for
example in \cite{affeldt2015formalization}.

We will work with relations of the form {\tt OutputType -> InputType -> Prop}. The final relation is called
{\tt DeflateEncodes}.

Left-Uniqueness (``injectivity'') can be formalized as $\forall_{a,b,l}.R\,a\,l\rightarrow R\,b\,l\rightarrow
a=b$. However, when reading from a stream, it must always be clear when to ``stop'' reading, which essentially
means that given an input $l$ such that $R\,a\,l$, it cannot be extended:
$\forall_{a,b,l,l'}.R\,a\,l\rightarrow R\,b\,(l\concat l')\rightarrow l' = \nil$. We proved that these two
properties together are equivalent to the following property, which we call {\it strong uniqueness}:
\begin{eqnarray*}
&\operatorname{StrongUnique}(R):\Leftrightarrow \\ & \forall_{a,b,l_a,l_a',l_b,l_b'}. l_a\concat l_a'=l_b\concat l_b'\rightarrow R a l_a \rightarrow R b l_b \rightarrow a = b \wedge l_a = l_b
\end{eqnarray*}
This is formally proved as {\tt StrongUniqueLemma} in {\tt StrongDec.v}. While strong uniqueness gives us
uniqueness of a prefix, provided that it exists, we need an additional property that states that it is
actually decidable whether such a prefix exists, which we call {\it strong decidability}:
$$
\operatorname{StrongDec}(R):\Leftrightarrow \forall l. (\lambda_X.X\vee\neg X) (\exists_{a, l', l''}. l=l'\concat l''\wedge R a l')
$$
All existences are constructive: If a prefix exists, then we can calculate it. Therefore, proving strong
decidability yields a parser for the respective relation. Conversely, if you can write and verify a parser for
it, then existence follows.

Strong decidability and strong uniqueness reflect the obvious type of a verified decoder: If a relation
satisfies both properties, it is well-suited for parsing. Indeed, for $R$ being our formalization of the
Deflate standard, we give a formal proof of $\operatorname{StrongDec}(R)$ which is such that the extracted
decoding function constitutes a usable reference implementation in the sense that it can successfully
decompress files of several megabytes. We can combine such relations in a way similar to parser monads, a
bind-like combinator can be defined that first applies the first relation, and then the second relation:
\[ Q \gg\!\!=_c R := \mu_\xi (\forall_{b_q,b_r,a_q,a_r}. Q\,b_q\,a_q \rightarrow R\,b_q\,b_r\,a_r \rightarrow \xi\,(c\,b_q\,b_r)\,(a_q +\!+ a_r)) \]
This combinator preserves strong uniqueness and decidability. More complicated combinators can be built from
it. This makes it is easy to replace parts of strong decidability proofs and optimize them, and makes the
implementation modular. This way we could benchmark optimizations before verifying them (by using {\tt admit},
for example), which made programming much easier.

The definitions can be found in {\tt StrongDec.v}, most proofs for our encoding relation can be found in {\tt
  EncodingRelationProperties.v}

We think that our overall theory of such grammars and parsers is usable for many other data formats: It should
be usable whenever parsing does not need to be interactive in the sense that it must produce answers to
requests (like in many bidirectional network protocols). But despite this drawback, it should be applicable in
many practical situations, and is very simple.

\section{Compression}\label{section:Compression}
Compression is by no means unique, and depends on the desired trade off between speed and compression ratio. We
implemented an algorithm that does not yet utilize optimal Huffman codings, but only searches for possible
backreferences, and saves everything as statically compressed blocks. Especially for ASCII texts this is
usually a disadvantage, and we plan to include this into the algorithm in the future to gain better
compression results. The algorithm calculates a hashsum of every three read bytes and saves their position in
a hash table which has queues of doom as buckets. This follows a recommendation from \cite{rfc1952}, adapted
to the purely functional paradigm. The implementation can be found in {\tt Compress.v}.

\section{Conclusion}\label{fwork}
Our contribution is a complete mathematical formalization of Deflate. We formalized the proofs in Coq, such
that an implementation of a decompression algorithm in Haskell can be extracted. We tested this implementation
against some datasets, and observed that it is compatible with other implementations of Deflate. We
implemented a simple compression algorithm and a decompression algorithm, both fully verified against the
specification, with reasonable speed.

The project's source code can be found under
\\\url{http://www2.tcs.ifi.lmu.de/~senjak/fm2016/deflate.tar.gz}. For build instructions, see {\tt
  README.txt}. It works under Coq 8.4pl6, and GHC version 7.10.3, but most of the code should be portable
across versions. We also plan to maintain our GitHub-repository at
\url{https://github.com/dasuxullebt/deflate} in the future.

We gave a flexible, modular and simple way of specifying grammars and using these specifications to create
stream parsers. Our project shows that program extraction from proofs and performance are not a
contradiction. We already developed two not-yet verified faster algorithms to resolve backreferences, one of
which is purely functional, which we will formally verify in the future. While we believe that there is still
potential for optimization of our Coq code, we hope to use our specification to create a verified
implementation in C, using the Verified Software Toolchain \cite{programlogics}.

\nocite{Schwichtenberg2013-SCHMFC}

\begin{arxivenv}
\begin{appendix} 
\section{Benchmarks}\label{benchmarks}
Notice that in our tests, we added GZip headers so we could easily decompress it with {\tt gunzip(1)}; this
part is not formally verified, as it is not part of Deflate -- but as it is just the adding of a small header
and checksum, a formal verification would not add much value, especially as we only use this part to interact
with unverified software. The results of the benchmark can be found in the table below. We can furthermore
extract a decompression algorithm from the parsability proofs. This is useful for testing, but also
interesting for its own sake. In\xcite{Nogin_writingconstructive}, some principles for writing proofs with
efficient extracted algorithms are given, which we mostly followed. We made the observation that relying on
lazy evaluation, as done by Haskell, gives these principles for free to a certain extent. Though we initially
hoped that this was sufficient to get a usable implementation, it turned out to be only usable for very small
datasets. We then chose a combined ``top-down'' approach, in which we tried to identify the slowest parts of
our extracted program, and optimize these. The modular relational design of our implementation proved as a
useful property, as well as Coq's possibility to override definitions using {\tt Extract
  Constant}. Furthermore, this enables us to replace parts of the program with unverified code first, and
verify that code afterwards, if it performs well. The performance-critical part of the decompression is the
resolution of backreferences. We therefore show the performance of the extracted decompression algorithm
without resolution of backreferences in our benchmarks in the table below (the program will then yield a
sequence of bytes and backreferences). The compression rate of {\tt kennedy.xls} is especially good, while the
times we need for compression and decompression are especially bad. This suggests that the compressed version
of {\tt kennedy.xls} contains lots of backreferences. (The unverified algorithm we mentioned in Section
\ref{section:rbr} only takes seconds to decompress {\tt kennedy.xls} and is purely functional, and we already
have an informal correctness proof, but we are not done proving it formally correct.) The benchmarks are for
the Canterbury Corpus \cite{canterbury} on an {\tt Intel(R) Core(TM) i7-4770 CPU @ 3.40GHz}:

\noindent
\begin{tabular}{|c||c|c|c|c|c|}
\hline
File & Original & Comp- & Compress & Decompress & Decompress\\
~ & Bytes & ressed & Time & (No Back- & \\
~ & ~ & Bytes & ~ & references) & ~ \\
\hline \hline
{\tt alice29.txt} & 152089 & 118126 & 2m51.480s & 0m5.047s & 5m13.108s \\ \hline
{\tt asyoulik.txt} & 125179 & 97187 & 1m56.700s & 0m4.207s & 4m23.135s \\ \hline
{\tt cp.html} & 24603 & 15139 & 0m6.907s & 0m0.934s & 0m39.143s \\ \hline
{\tt fields.c} & 11150 & 6325 & 0m1.735s & 0m0.576s & 0m4.101s \\ \hline
{\tt grammar.lsp} & 3721 & 2013 & 0m0.573s & 0m0.417s & 0m0.641s \\ \hline
{\tt kennedy.xls} & 1029744 & 439956 & 60m59.977s & 0m18.212s & 44m12.826s \\ \hline
{\tt lcet10.txt} & 426754 & 327304 & 27m18.671s & 0m13.391s & 19m12.085s \\ \hline
{\tt plrabn12.txt} & 481861 & 389913 & 40m13.934s & 0m16.114s & 21m7.902s \\ \hline
{\tt ptt5} & 513216 & 269382 & 49m16.218s & 0m13.327s & 23m22.609s \\ \hline
{\tt sum} & 38240 & 21703 & 0m22.869s & 0m1.270s & 1m58.054s \\ \hline
{\tt xargs.1} & 4227 & 2656 & 0m0.847s & 0m0.425s & 0m1.481s \\ \hline
\end{tabular}

\section{An Overview of Deflate}\label{appendix:overview}
We give a short informal overview of correct Deflate streams, to show you the complexity of the format, and in
the hope that it will make it easier to follow our definitions and relations. {\bf Notice that we are
  describing an existing and widespread standard here. Especially, this standard was not made by us.} We are
giving this overview so you do not have to read the actual standard. There are many parts which seem
overcomplicated, but that is probably due to the fact that this is a grown standard. To clarify our
terminology, we say a {\em character} is an element from an alphabet, a {\em codepoint} is a number that is
encoded in some dataset and may stand for either a character or some instructional control structure, a {\em
  coding} is a function that assigns bit sequences to codepoints, and a {\em code} is a bit sequence which is
associated with some codepoint through a coding.

Deflate streams can make use of three techniques of compression: prefix-free coding (as in Huffman codes), run
length encoding and backreferences as found in Lempel-Ziv-compression. The latter two use the same mechanism,
as described in Section \ref{section:Backreferences}. Furthermore, Deflate streams are byte streams, which are
streams of values from $0$ to $255$. With such byte streams, one associates bit streams by concatenating the
bytes LSB (least significant bit first), regardless of how they are actually sent. This is necessary, because
most Deflate modes operate conceptually on the bit level.

On top of this bit stream, the data is sliced into blocks which may be compressed. A block starts with a bit
that indicates whether it is the last block, and two further bits indicating whether the block is
``statically'' compressed, that is, with fixed codings defined in the standard, or ``dynamically'' compressed,
where the codings must be saved, or uncompressed.

For an uncompressed block, the bits up to the next byte boundary are ignored, then a 16 bit integer followed
by its bitwise complement are saved byte aligned. It denotes the number of bytes the block
contains. Uncompressed blocks cannot contain backreferences. The advantage of the byte aligned layout of
uncompressed blocks is that it allows for the use of byte-wise access, e. g. \texttt{sendfile(2)}. On the
formal level this brings the extra difficulty that Deflate streams cannot be described as a formal grammar on
a bit sequence without knowing the byte boundaries.

Compressed blocks start immediately after the three header bits. Statically compressed blocks have predefined
codings, and therefore, the compressed data immediately follows the header bits. Even when the actual
compression does not utilize Huffman codings to save memory directly (which will usually be the case for
statically compressed blocks), two prefix-free codings are needed to encode backreferences: A coding does not
only encode the 256 byte-values, but up to 286 (288 with 2 forbidden) characters, of which one, 256, is used
as end mark, and the values above 256 are used to denote backreferences. If the decoder encounters a code for
such a character, a certain number of additional bits is read, from which the length of this backreference is
calculated. Then, using another coding, a value from 0 to 29 is read, and additional bits, which determine the
distance of that backreference. The numbers of actual bits for characters can be looked up in a table
specified in the standard \cite{rfc1951}.

Dynamically compressed blocks get another header defining the two Deflate codings. The codings are saved as
sequences of numbers, as formalized in Section \ref{section:DeflateCodings}. This way is similar in other
compression standards that utilize prefix-free codings, like bzip2. These sequences are themselves compressed,
and another header is needed to save their coding.

For clarity, let us consider a small example. As we have to deal with three layers of compression, it is not
always clear what a code, a coding and a character is. For this example, we add indices to the words to denote
which layer they are from. A code$_n$ is a sequence of bits for a codepoint$_n$. A codepoint$_n$ is a number
assigned to either a character$_n$ or some special instruction on that level. A coding$_n$ is a deflate coding
for codepoints$_n$. Raw bytes are characters$_0$. We want to compress the string

$$ \texttt{ananas\_banana\_batata} $$

Firstly, as we want to compress, we need an end sign (which gets the code\-point$_0$ 256), which we will
denote as $\emptyset$.  Since this string has a lot of repetitions, we can use backreferences. A backreference
is a pair $\left<l,d\right>$ of length and distance, which can be seen as an instruction to copy $l$ bytes
from a backbuffer of decompressed data, beginning at the $d$-th last position, to the front, in ascending
order, such that the copied bytes may be bytes that are written during the resolution of this backreference,
hence allowing for both deduplication and run length encoding. In our case, we can add two backreferences.
$$ \texttt{an} \left<3,2\right> \texttt{s\_b} \left<5,8\right> \left<3,7\right> \texttt{t} \left<3,2\right> \emptyset $$
The codepoint$_0$ for length $3$ is $257$, and for $5$ it is $259$. They do not have suffixes. The codepoint$_0$ for the
distance $2$ is $1$ with no suffix, for $7$ and $8$ it is $5$, and it has a single bit as suffix, which indicates
whether it stands for $7$ or $8$. We write $a^l$ to denote that $a$ is a literal/length codepoint$_0$, with an index
denoting the corresponding character$_0$ if any, and $a^d$ to denote that it is a distance codepoint$_0$. We furthermore
put suffix bit sequences in brackets. Then we get
$$ 97^l_a\ 110^l_n\ 257^l 1^d\ 115^l_s\ 95^l_{\_}\ 98^l_b\ 259^l 5^d (1)\ 257^l\ 5^d (0)\ 116^l_t\ 257^l 1^d\ 256^l_\emptyset $$
The frequencies of literal/length codepoints$_0$ are
\[ 95\times 1; 97\times 1; 98\times 1; 110\times 1; 115\times 1; 116\times 1; 256\times 1; 257\times 3; 259\times 1\]
The frequencies of distance codepoints$_0$ are
\[1\times 2; 5\times 2\]
The optimal deflate codings$_0$ (as defined in Section \ref{section:DeflateCodings}) are
\[ 95\rightarrow 1100; 97\rightarrow 010; 98\rightarrow 011; 110\rightarrow 100; 115\rightarrow 1101 \]
\[ 116\rightarrow 101; 256\rightarrow 1110; 257\rightarrow 00; 259\rightarrow 1111 \]
and
\[ 1\rightarrow 0; 5\rightarrow 1 \]

To clarify the terminology, note that e. g. character$_0$ \texttt{a} has codepoint$_0$ $010$ under the given
coding$_0$. The reason for introducing the concept of ``codepoints$_0$'' is that the alphabets for lengths and
characters$_0$ are merged: Every character$_0$ has an assigned codepoint$_0$, but not every codepoint$_0$ has
a character$_0$, e. g. the codepoint$_0$ $257$ indicates a backreference, but still has the code$_0$ $00$. Our
message can therefore be encoded by the following sequence of bits (spaces are included for clarity):

\begin{lstlisting}
010 100 00 0 1101 1100 011 1111 1 1 00 1 0 101 00 0 1110
\end{lstlisting}

As proved in Section \ref{section:DeflateCodings}, it is sufficient to save the lengths, which is done as a
run length encoded list, where length $0$ means that the corresponding codepoint$_0$ does not occur. We use a
semicolon to separate the literal/length coding$_0$ from the distance coding$_0$. Both lists are not separated
in the actual file, and it is even allowed that run-length-encoding-instructions spread across their
border. What part of the unfolded list belongs to which coding is specified in another header defined later.
\[ \underbrace{0,\ldots,0}_{95\times}, 4, 0, 3, 3, \underbrace{0,\ldots,0}_{11\times}, 3, \underbrace{0,\ldots,0}_{138\times}, 0, 4, 2, 0, 4; 0, 1, 0, 1 \]
This list will itself be compressed, thus, the lengths of codes$_0$ become characters$_1$. Notice that due to
a header described later, we can cut off all characters$_1$ after the last nonzero character$_1$ of both
sequences. The maximum length that is allowed for a code$_0$ in deflate is $15$. Deflate uses the
codepoints$_1$ $16,17,18$ for its run length encoding. Specifically, $17$ and $18$ are for repeated
zeroes. $17$ gets a $3$ bit suffix ranging from $3$ to $10$, and $18$ gets a $7$ bit suffix, ranging from $11$
to $138$. These suffixes are least-significant-bit first. The former sequence therefore becomes
\[ 18(0010101), 4, 0, 3, 3, 18(0000000), 3, 17(100), \]
\[ 4, 3, 18(1111111), 0, 4, 2, 0, 4; 0, 1, 0, 1 \]
Now, the frequencies of codepoints$_1$ are
\[ 1\times 2; 2\times 1; 3\times 4; 4\times 4; 17\times 1; 18\times 2 \]
Therefore, the optimal coding$_1$ is
\[ 1 \rightarrow 1110; 2 \rightarrow 1111; 3 \rightarrow 00; 4 \rightarrow 01; 17 \rightarrow 101; 18 \rightarrow 110 \]
The sequence of code$_0$ lengths can therefore be saved as
\begin{lstlisting}
110 0010101 01 100 00 00 110 0000000 00 101 100 01 00
110 1111111 100 01 1111 100 01 100 1110 101 000 1110
\end{lstlisting}

We now have to save the coding$_1$, and again, it is sufficient to save the code$_1$ lengths. These code$_1$
lengths for the $19$ codepoints$_1$ are saved as $3$ bit least-significant-bit first numbers, but in the
following order: $16,\allowbreak 17,\allowbreak 18,\allowbreak 0,\allowbreak 8,\allowbreak 7,\allowbreak
9,\allowbreak 6,\allowbreak 10,\allowbreak 5,\allowbreak 11,\allowbreak 4,\allowbreak 12,\allowbreak
3,\allowbreak 13,\allowbreak 2,\allowbreak 14,\allowbreak 1,\allowbreak 15$. Again, the codepoint$_2$ $0$
denotes that the corresponding codepoint$_1$ does not occur. We can furthermore cut off the codepoint$_2$ for
the last code$_1$ length (in the given order), $15$, which is $0$ in our example, due to a header described
later. The sequence of codepoints$_2$ therefore becomes

\begin{lstlisting}
000 110 110 110 000 000 000 000 000
000 000 010 000 010 000 001 000 001
\end{lstlisting}

We now come to the aforementioned header that in particular allows us to economize trailing zeroes. We need
the number of literal/length codepoints$_0$ and distance codepoints$_0$ saved in the former sequence, and the
number of saved codepoints$_2$. These are 260, 6 and 18, respectively. The first one is saved as a 5 bit
number ranging from $257$ to $286$ (the values $287$ and $288$ are forbidden), the second one is saved as a 5
bit number ranging from $1$ to $32$, the third one is saved as a 4 bit number ranging from $4$ to
$19$. Therefore, this header becomes

\begin{lstlisting}
11000 10100 0111
\end{lstlisting}

With three additional header bits, denoting that what follows is the last block, and that it is a dynamically
compressed block, (and with 7 additional bits to fill up the byte in line \ref{lst:lxfill}) we get
\begin{lstlisting}[label={lst:shortExample},escapeinside={@}{@}]
@\label{lst:lxbh}@1 0 1
@\label{lst:lxhhh}@11000 10100 0111
 
@\label{lst:lxclc1}@000 110 110 110 000 000 000 000 000
@\label{lst:lxclc2}@000 000 010 000 010 000 001 000 001

@\label{lst:lxlld1}@110 0010101 01 100 00 00 110 0000000 00 101 100 01 00
@\label{lst:lxlld2}@110 1111111 100 01 1111 100 01 100 1110 101 000 1110

@\label{lst:lxdata}@010 100 00 0 1101 1100 011 1111 1 1 00 1 0 101 00 0 1110

@\label{lst:lxfill}@0000000
\end{lstlisting}
Of course, this example is constructed for instructional purposes, and the compressed message is longer than
the original text. However, Deflate also supports statically compressed blocks, which are good for repetitive
files. Those use a fixed coding$_0$ which is completely described in the standard\xcite{rfc1951}. Its relevant
part for our example is the following:
\begin{eqnarray*}
& 95^l\rightarrow 10001111; 97^l\rightarrow 10010001; 98^l\rightarrow 10010010; 110^l\rightarrow 10011110; &\\
& 115^l\rightarrow 10100011; 116^l\rightarrow 10100100; 256^l\rightarrow 0000000; 257^l\rightarrow 0000001; &\\
& 259^l\rightarrow 0000011; 1^d \rightarrow 00001; 5^d \rightarrow 00101 &
\end{eqnarray*}
With the three header bits, and 4 additional padding bits to fill the byte, the compressed file is
\begin{lstlisting}
1 1 0

10010001 10011110 0000001 00001 10100011 10001111 10010010
0000011 00101 1 0000001 00101 0 10100100 0000001 00001
0000000

0000
\end{lstlisting}
which is, in fact, slightly shorter than the original string. Since we did this manually, we wanted to check
whether it is actually correct, so we wrote a little program that uses the zlib to do that. We share this
program with you: In our software directory, see Section \ref{fwork}, it is called {\tt bits.cpp}. You can
just pipe the bit lists from our listings into this program (notice that under Linux, when you copy-paste them
manually into stdin, you need to press Ctrl-D afterwards to enforce {\tt EOF}), and notice that they, in fact,
decompress to our original string.

\section{Tables from the RFC}\label{appendix:tables}

The following table from \cite{rfc1951} assigns the number of extra (suffix) bits, and the range of lengths
that can be encoded, to length codes (literal/length codes $>256$):

{\footnotesize
  \begin{tabular}{|c|c|c||c|c|c||c|c|c|}
    \hline
     & Extra &          &      & Extra &        &      & Extra & \\
Code & Bits & Length(s) & Code & Bits & Lengths & Code & Bits & Length(s) \\
\hline
 257 & 0 &   3   &    267 &  1 &  15,16 &    277  & 4 &  67-82\\
 258 & 0 &   4   &    268 &  1 &  17,18 &    278  & 4 &  83-98\\
 259 & 0 &   5   &    269 &  2 &  19-22 &    279  & 4 &  99-114\\
 260 & 0 &   6   &    270 &  2 &  23-26 &    280  & 4 & 115-130\\
 261 & 0 &   7   &    271 &  2 &  27-30 &    281  & 5 & 131-162\\
 262 & 0 &   8   &    272 &  2 &  31-34 &    282  & 5 & 163-194\\
 263 & 0 &   9   &    273 &  3 &  35-42 &    283  & 5 & 195-226\\
 264 & 0 &  10   &    274 &  3 &  43-50 &    284  & 5 & 227-257\\
 265 & 1 &11,12  &    275 &  3 &  51-58 &    285  & 0 &   258\\
 266 & 1 &13,14  &    276 &  3 &  59-66 &         &   & \\ \hline
\end{tabular}}

The following table from \cite{rfc1951} assigns the number of extra (suffix) bits, and the range of distances
that can be encoded, to distance codes:

{\footnotesize
  \begin{tabular}{|c|c|c||c|c|c||c|c|c|}
    \hline
     & Extra &          &      & Extra &        &      & Extra & \\
Code & Bits & Distance(s) & Code & Bits & Distance(s) & Code & Bits & Distance(s) \\
\hline
               0 & 0 &  1  &  10 & 4 &   33-48 &  20  & 9 & 1025-1536\\
               1 & 0 &  2  &  11 & 4 &   49-64 &  21  & 9 & 1537-2048\\
               2 & 0 &  3  &  12 & 5 &   65-96 &  22 & 10 & 2049-3072\\
               3 & 0 &  4  &  13 & 5 &   97-128 &  23 & 10 & 3073-4096\\
               4 & 1 & 5,6 &  14 & 6 &  129-192 &  24 & 11 & 4097-6144\\
               5 & 1 & 7,8 &  15 & 6 &  193-256 &  25 & 11 & 6145-8192\\
               6 & 2 & 9-12 &  16 & 7 &  257-384 &  26 & 12 & 8193-12288\\
               7 & 2 & 13-16 &  17 & 7 &  385-512 &  27 & 12 & 12289-16384\\
               8 & 3 & 17-24 &  18 & 8 &  513-768 &  28 & 13 & 16385-24576\\
               9 & 3 & 25-32 &  19 & 8 & 769-1024 &  29 & 13 & 24577-32768\\ \hline
\end{tabular}}

\end{appendix}
\end{arxivenv}

\bibliographystyle{splncs03}
\bibliography{rfc,codings,compcert,RickettsRJTL14}
\end{document}